\newtcolorbox[auto counter]{mybox}[2][]{float,title={\textcolor{black}{Box~\thetcbcounter: #2}},#1, 
colframe=white!70!gray}
\renewcommand*\citet[1]{\cite{#1}}
\crefname{figure}{Figure}{Figures}
\crefname{equation}{}{}
\crefname{definition}{Definition}{Definitions}
\crefname{corollary}{Corollary}{Corollaries}
\crefname{proposition}{Proposition}{Propositions}
\crefname{theorem}{Theorem}{Theorems}
\crefname{remark}{Remark}{Remarks}
\crefname{principle}{Principle}{Principles}
\crefname{lemma}{Lemma}{Lemmata}
\crefname{claim}{Claim}{Claims}
\crefname{table}{Table}{Tables}
\crefname{section}{Section}{Sections}
\crefname{subsection}{Subsection}{Subsections}
\crefname{subsubsection}{Subsection}{Subsection}
\crefname{assumption}{Assumption}{Assumptions}
\crefname{appendix}{Appendix}{Appendices}
\numberwithin{equation}{section}
\begin{document}

\begin{frontmatter}
\title{Infer-and-widen, or not?}

\begin{aug}
\author[A]{\fnms{Ronan}~\snm{Perry}\ead[label=e1]{rflperry@uw.edu}},
\author[B]{\fnms{Zichun}~\snm{Xu}\ead[label=e2]{zx87@uw.edu}},
\author[C]{\fnms{Olivia}~\snm{McGough}\ead[label=e3]{mcgougho@uw.edu}},
\and
\author[D]{\fnms{Daniela}~\snm{Witten}\ead[label=e4]{dwitten@uw.edu}}
\address[A]{Ronan Perry is a PhD student in the Department of Statistics, University of Washington\printead[presep={\ }]{e1}.}
\address[B]{Zichun Xu is is a PhD student in the Department of Biostatistics, University of Washington\printead[presep={\ }]{e2}.}
\address[C]{Olivia McGough is a PhD student in the Department of Statistics, University of Washington\printead[presep={\ }]{e3}.}
\address[D]{Daniela Witten is a Professor in the Departments of Statistics and Biostatistics, University of Washington\printead[presep={\ }]{e4}.}
\end{aug}

\begin{abstract}
    In recent years, there has been substantial interest in the task of selective inference: inference on a parameter that is selected from the data. Many of the existing proposals fall into what we refer to as the \emph{infer-and-widen} framework: they produce symmetric confidence intervals whose midpoints do not account for selection and therefore are biased; thus, the intervals must be wide enough to account for this bias.
In this paper, we investigate infer-and-widen approaches in three vignettes: the winner's curse, maximal contrasts, and inference after the lasso.
In each of these examples, we show that a state-of-the-art infer-and-widen proposal leads to confidence intervals that are wider than a non-infer-and-widen alternative. 
Furthermore, even an ``oracle''  infer-and-widen confidence interval --- the narrowest possible interval that could be theoretically attained via infer-and-widen --- can be wider than the alternative.  
\end{abstract}

\begin{keyword}
\kwd{Selective inference}
\kwd{post-selection inference}
\kwd{data fission}
\kwd{algorithmic stability}
\kwd{infer-and-widen}
\end{keyword}

\end{frontmatter}

\section{Introduction}
\label{sec:intro}

In classical statistics, the target of inference --- for instance, the null hypothesis that we wish to test, or the parameter for which we wish to construct a confidence interval --- is  fixed. However,  motivated by the explosion of data across a variety of fields, recent interest has focused on the setting where the target of inference is itself a function of the data, and therefore is random. 
The terms \emph{selective inference} and \emph{post-selection inference} describe the set of strategies available for  conducting inference on a random target.

To formalize this, consider a random variable $Y $ with domain $\mathcal{Y}$, and a collection of parameters $\theta$ whose elements can be indexed by a (possibly infinite) index set $\Ical$. For a fixed index $i \in \Ical$, and for $\alpha \in (0,1)$, a \emph{classical} $1-\alpha$ confidence interval for $\theta_i$, which we will denote $\CI_i^{\alpha}$, satisfies 
\begin{equation}\label{eq:nominal_coverage}
    \PP \rbr{\theta_i \not \in \CI_i^{\alpha}(Y)} \leq \alpha.
\end{equation}
Now, suppose that  $i \in \mathcal{I}$ is no longer a fixed index; instead, some selection rule $\Scal : \mathcal{Y} \to \Ical$ uses the data to select a parameter on which to conduct inference. A confidence interval provides valid \emph{unconditional coverage}~\citep{berk_valid_2013} at level $\alpha$ if
\begin{equation}\label{eq:unconditional_coverage}
    \PP \rbr{\theta_{\Sel(Y)} \not \in \CI_{\Sel(Y)}^{\alpha}(Y)} \leq \alpha.
\end{equation}
Unconditional coverage is \emph{not} guaranteed by classical confidence intervals. In this paper, we consider methods that yield unconditional guarantees, in the sense of~\cref{eq:unconditional_coverage}. 
Some of these confidence intervals also yield (stronger) \emph{conditional} guarantees~\citep{fithian_optimal_2017}, in the sense that
\begin{equation}\label{eq:conditional_coverage}
    \PP \rbr{\theta_{\Sel(Y)} \not \in \CI_{\Sel(Y)}^{\alpha}(Y)~|~\Ccal(Y)} \leq \alpha
\end{equation}
for an event $\Ccal(Y)$ satisfying $\sigma\rbr{\Ccal(Y)} \supseteq \sigma \rbr{\Scal(Y)}$, where $\sigma(\cdot)$ denotes the sigma algebra generated by a random variable, i.e., the coverage guarantee in~\cref{eq:conditional_coverage} conditions on at least the selection event. By the law of total expectation, conditional coverage, \cref{eq:conditional_coverage}, implies unconditional coverage, \cref{eq:unconditional_coverage}.



    


\subsection{The infer-and-widen framework}

As presented in Box~\ref{box:infer-and-widen}, the infer-and-widen framework involves (Step 1) defining a target of inference using some or all of the data, and then (Step 2) symmetrically adjusting the  critical values arising from ``classical'' inference to account for the fact that the target of inference is not fixed in order to achieve unconditional coverage, \eqref{eq:unconditional_coverage}. The simplest examples of this strategy are the Bonferroni~\citep{bonferroni_teoria_1936, dunn_multiple_1961},  Holm-Bonferroni~\citep{holm_simple_1979}, and \v{S}id\'{a}k \citep{sidak_rectangular_1967} procedures, which are suitable when a finite number of targets of inference are considered (and in the case of \v{S}id\'{a}k, under additional assumptions). In the special case of inference on contrasts of a linear model, the Scheff\'{e} correction is available~\citep{scheffe_analysis_1959}; this idea is further refined by  the PoSI method~\citep{berk_valid_2013}. While conceptually attractive, these approaches tend to be quite conservative (Bonferroni, Holm-Bonferroni, \v{S}id\'{a}k), are limited to a linear model setting (Scheff\'{e}, PoSI), or are often computationally intractable (PoSI).

Other recent infer-and-widen approaches have attempted to address these shortcomings. The \emph{local simultaneous inference} approach of~\citet{zrnic_locally_2024} focuses a simultaneous correction on the subset of hypotheses that are most likely given the observed data. The \emph{algorithmic stability} proposal of \cite{zrnic_post-selection_2023} also adapts to the details of the selection algorithm. By drawing connections with the field of differential privacy \citep{dwork_book_2014}, the ``stability'' of the approach for selecting the target of inference in Step 1 of Box~\ref{box:infer-and-widen} is quantified; this, in turn, leads to a stability-specific adjustment of the critical values in Step 2 of Box~\ref{box:infer-and-widen} (see also Section A of the online supplement). In a sense, these two approaches achieve the lofty goal laid out in the PoSI proposal of \cite{berk_valid_2013}  by creating frameworks to compute selection-specific adjustments to the critical values.

\begin{mybox}[floatplacement=!htb,label={box:infer-and-widen}]{\emph{Infer-and-widen.}} 
\begin{enumerate}
\item A target of inference is selected from the data.
\item ``Classical'' inference is conducted, with a symmetric adjustment of the critical values to account for the data-driven selection of the target of inference. 
\end{enumerate}
\end{mybox}

\subsection{Other approaches for valid selective inference}

Not all approaches that yield unconditional guarantees of the form \eqref{eq:unconditional_coverage} fall into the infer-and-widen framework. Some methods produce \textit{asymmetric} confidence intervals around the biased point estimate: for instance, \citet{benjamini_confidence_2019} and \citet{fuentes_confidence_2018} construct an asymmetric Bonferroni correction that accounts for the direction of the bias. More recently, \citet{zrnic_flexible_2025} extend the local simultaneous approach of \citet{zrnic_locally_2024} by additionally adapting the correction to the specific selection algorithm. The hybrid approaches of~\cite{andrews_inference_2024} and  \cite{mccloskey_hybrid_2024} apportion a fraction of the Type 1 error budget to rule out unlikely selection events, and then construct a conditionally-valid confidence interval on the likely hypotheses; note that this only guarantees unconditional coverage, \cref{eq:unconditional_coverage}. Lastly, there is a line of empirical Bayes work whose intervals are centered at shrinkage estimators which, while likely biased, improve upon the non-selective estimator \citep{hwang_empirical_2013}.

Furthermore, a number of methods yield coverage guarantees conditional on the selection event, i.e., \eqref{eq:conditional_coverage}. Some methods do this by obtaining independent training and test sets; then, the training set can be used to define the target of inference, and the test set to conduct inference on this target. Because the two sets are independent, we can conduct inference using the test set \emph{as though the target of inference were fixed}, despite the fact that it is actually a function of the training data. Sample splitting, applicable  in the setting of independent and identically distributed observations, provides one approach to obtain independent training and test sets \citep{cox_note_1975}. 
More recent  \emph{data thinning} proposals  generalize the idea of sample splitting, and can be applied when only a single observation is available \citep{rasines_splitting_2023, neufeld_data_2024, dharamshi_generalized_2024}. 

In cases where it is either impossible or undesirable to create independent training and test sets, strategies that achieve conditional guarantees of the form~\cref{eq:conditional_coverage} may still be available. For instance, a target of inference can be defined using \emph{all} of the data, and then inference can be conducted \emph{conditional} on the event that the data led to this particular target of inference. Examples include the proposals of~\cite{fithian_optimal_2017, lee_exact_2016, chen_selective_2023, gao_selective_2022, andrews_inference_2024}.
Alternatively, if a researcher is willing to define the target of inference using only a subset of the data or using a noisy version of the data, then \emph{data fission}~\citep{leiner_data_2023} or \emph{randomized conditional selective inference}~\citep{tian_selective_2018, panigrahi_exact_2024} may be applicable. Randomization in this manner preserves information for inference, overcoming a limitation of conditioning on selection using all of the data.


\subsection{Contributions and organization}

In this paper, we conduct an empirical and theoretical evaluation of infer-and-widen strategies in the context of three vignettes. The first two vignettes involve inference after a randomized selection rule, and we derive alternative confidence intervals to compare to an existing infer-and-widen approach. In the third vignette, we compare a suite of existing methods.
The bottom line of our paper is as follows: when comparing infer-and-widen strategies to alternative selective inference methods that make use of \emph{identical} selection events, the latter often outperform the former in terms of confidence interval width. Furthermore, in some settings, the latter even outperform the ``oracle'' infer-and-widen procedure: the narrowest possible interval that could be theoretically attained within the infer-and-widen framework.

The organization is as follows. In Section~\ref{sec:infer_and_widen}, we formalize the infer-and-widen framework. In Sections~\ref{sec:vignette-1},  \ref{sec:vignette-2}, and \ref{sec:vignette-3}, we compare its performance to new alternative intervals that we derive in two vignettes considered in \cite{zrnic_post-selection_2023}, and to existing alternative intervals in an example involving model selection via the lasso. In Sections \ref{sec:vignette-1} and \ref{sec:vignette-2}, the selection event is randomized, whereas in \cref{sec:vignette-3} it is not. In~\cref{sec:vignette-1-bias}, we show analytically that the infer-and-widen interval can suffer from a highly biased midpoint. We close with a  discussion in Section~\ref{sec:discussion}. Theoretical results and additional empirical comparisons are proven in the online supplement. Scripts to fully reproduce all numerical results can be found at \href{https://github.com/rflperry/infer_and_widen}{https://github.com/rflperry/infer\_and\_widen}.




\section{Formalizing the infer-and-widen framework}\label{sec:infer_and_widen}

We now formalize the \emph{infer-and-widen} (I\&W) framework, outlined in Box~\ref{box:infer-and-widen}. This  involves a modification of the classical confidence interval $\CI_{i}^{\alpha}$ in~\eqref{eq:nominal_coverage} to account for the fact that $\theta_{\Sel(Y)}$ is a data-driven parameter. Based on the particulars of the selection function $\Sel(\cdot): \mathcal{Y} \rightarrow \mathcal{I}$ and data $Y$, a modified $\alpha$-level, $H(\alpha, \Scal, Y)$, is derived so that $\CI^{H(\alpha,\Scal,Y)}_{\Sel(Y)}(Y)$ is unconditionally valid at level $\alpha$ in the sense of~\cref{eq:unconditional_coverage}, i.e., 
\begin{equation}\label{eq:infer_and_widen_ci}
    \PP \rbr{\theta_{\Sel(Y)} \not \in \CI_{\Sel(Y)}^{H(\alpha, \Scal, Y)}(Y)} \leq \alpha.
\end{equation}
In effect, this procedure  symmetrically widens the classical confidence interval to account for bias in its midpoint.

In a Bonferroni correction we have $H(\alpha, \Scal,Y) = \alpha / |\mathcal{I}|$, and in the Holm-Bonferroni correction it is slightly larger.  Since these approaches  adapt neither to the selection rule $\Sel(\cdot): \mathcal{Y} \rightarrow \mathcal{I}$ nor to the data $Y$, the constant $H(\alpha, \Scal, Y)$ is easy to compute but  overly conservative. 
Despite the notation $H(\alpha, \Scal,Y)$ used in \eqref{eq:infer_and_widen_ci}, the aforementioned infer-and-widen proposals result in confidence intervals whose widths do not depend on the particulars of the selection rule $\Sel(\cdot)$ nor data $Y$. 

The foundational PoSI proposal~\citep{berk_valid_2013} considers the setting of linear regression with fixed regressors. They derive a constant $H(\alpha, \Scal, Y)$ that enables inference on the coefficient(s) in \emph{any} submodel, regardless of the form of $\Sel(\cdot)$ or observed data $Y$. However, for a particular selection rule $\Sel(\cdot)$, the resulting confidence interval may be quite conservative, and computing the constant $H(\alpha, \Scal,Y)$ is challenging or infeasible.

In contrast are two newer approaches, which provide narrower constants $H(\alpha, \Scal, Y)$ that depend on the selection rule $\Scal(\cdot)$. The locally simultaneous inference method of \citet{zrnic_locally_2024} apportions the Type-1 error budget $\alpha$, first to use the data $Y$ to rule out the most unlikely hypotheses, and then to perform a simultaneous correction over a \textit{subset} of all $\abr{\Ical}$ hypotheses.
The algorithmic stability proposal of \cite{zrnic_post-selection_2023} adjusts the critical value $H(\alpha, \Scal, Y)$ based on how  ``stable'' the selection rule $\Scal(\cdot)$ is; intuitively, a more ``stable'' selection rule requires a less substantial adjustment.
Section A in the online supplement
contains a more detailed overview.


We close with a remark about the infer-and-widen framework:
at first glance, it appears that an  ``advantage'' of infer-and-widen, over alternative selective inference approaches, is that the interval in \eqref{eq:infer_and_widen_ci} is of the same form as the classical interval in the absence of selection \eqref{eq:nominal_coverage}, i.e., no bespoke methods for computing a confidence interval are required.  However, this is a double-edged sword, in the sense that \emph{bias induced by the selection event is perpetuated into the confidence interval}.
 Thus, an infer-and-widen  confidence interval that attains the nominal coverage may be \emph{extremely} wide, to account for bias in the interval's midpoint.
We will see that this is a major drawback of  infer-and-widen. 

\section{Vignette \#1: The winner's curse}
\label{sec:vignette-1}

Suppose that we observe $n$ independent Gaussian random variables $Y_1,\ldots,Y_n$ with common known variance, and are interested in conducting inference on the unknown mean of the largest observed variable. That is, 
\begin{equation}\label{eq:vignette-1}
    Y \sim \Norm_n(\mu, \sigma^2 I_n), 
\end{equation}
and the selection rule is 
\begin{equation}
 \Sel(Y) := \argmax_{i \in [n]} Y_i.
 \label{eq:vignette-1-selection}
\end{equation}
It is not hard to see that a $1-\alpha$ confidence interval for $\mu_{\Sel(Y)}$ that
treats $\Sel(Y)$ as though it were fixed (rather than a function of the data) will fail to achieve the nominal coverage. This is the ``winner's curse''~\citep{andrews_inference_2024}.

To achieve a ``stable'' algorithm, \cite{zrnic_post-selection_2023}  consider a randomized version of the  selection rule \eqref{eq:vignette-1-selection} via the addition of centered Laplace noise,
\begin{equation}
 \SelL(Y) := \argmax_{i \in [n]} (Y_i +  \zeta_i),
\label{eq:vignette-1-selection-laplace}
\end{equation}
where $\zeta_1,\ldots,\zeta_n \indsim \mathrm{Laplace}\left(c\right)$ and $c$ is a positive constant that governs the amount of additive noise. When $c=0$, $\SelL(Y)$ simplifies to $\Sel(Y)$ in \eqref{eq:vignette-1-selection}. 

\begin{proposition}
[An infer-and-widen interval based on \eqref{eq:vignette-1-selection-laplace}, given by \citep{zrnic_post-selection_2023}]
\label{thm:vignette-1-laplace-AS} Under~\cref{eq:vignette-1}, consider the selection rule $\SelL(\cdot)$ defined in \eqref{eq:vignette-1-selection-laplace}. For any $\eta > 0$ and $\nu \in (0, \alpha)$ such that $ \eta c \geq 2 z_{1 - \alpha (\alpha - \nu)/2n }$, where $z_q$ denotes the $q$th quantile of the standard normal distribution, the infer-and-widen confidence interval
     \begin{equation} 
        \label{eq:vignette-1-CI-LAS}
\CI_{{{\SelL}(Y)}}^{\alpha(1-\alpha + \nu)e^{-\eta}}(Y) := \left[  Y_{{\SelL}(Y)} \pm \sigma z_{1-\alpha(1-\alpha + \nu)e^{-\eta}/2}  \right]
    \end{equation}
    has $1-\alpha$ unconditional coverage for $\mu_{\SelL(Y)}$ in the sense of~\cref{eq:infer_and_widen_ci}.
\end{proposition}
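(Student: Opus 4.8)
The plan is to recast the coverage statement as a statement about a selected $p$-value, and then invoke the differential-privacy-to-selective-validity transfer that underlies the algorithmic stability proposal. First I would observe that, writing $\beta := \alpha(1-\alpha+\nu)e^{-\eta}$ for the level appearing in \eqref{eq:vignette-1-CI-LAS}, the non-coverage event $\mu_{\SelL(Y)} \not\in \CI^{\beta}_{\SelL(Y)}(Y)$ is \emph{exactly} the event that the two-sided $z$-test $p$-value for the selected coordinate falls below $\beta$. Indeed, for the statistic $Y_i \sim \Norm(\mu_i,\sigma^2)$ and $p_i := 2\rbr{1-\Phi\rbr{\abr{Y_i-\mu_i}/\sigma}}$, we have $\mu_i \not\in [Y_i \pm \sigma z_{1-\beta/2}]$ iff $\abr{Y_i-\mu_i}/\sigma > z_{1-\beta/2}$ iff $p_i \le \beta$. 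Since each $p_i$ is exactly $\mathrm{Uniform}(0,1)$, classically $\PP(p_i \le \beta)=\beta$ for each fixed $i$; the entire difficulty is that $\SelL(Y)$ is chosen using the same data $Y$.

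The engine of the proof is the claim that the randomized rule $\SelL(\cdot)$ is $(\eta,\delta)$-differentially private, in the appropriate sense, with additive slack $\delta := \alpha(\alpha-\nu)$, so that the transfer inequality
\[
\PP\rbr{p_{\SelL(Y)} \le u} \;\le\; e^{\eta}\, u + \delta
\]
holds for every $u$. Evaluating at $u=\beta=\alpha(1-\alpha+\nu)e^{-\eta}$ then gives $e^{\eta}\beta + \delta = \alpha(1-\alpha+\nu) + \alpha(\alpha-\nu) = \alpha$, which is precisely the bound \eqref{eq:infer_and_widen_ci} we want. Thus the Type-1 budget $\alpha$ splits cleanly into a multiplicative piece $e^{\eta}\beta$ that is absorbed by the privacy level $\eta$, and an additive piece $\delta$ that accounts for rare events.

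The substantive work is to verify the two privacy parameters. For $\eta$: adding i.i.d.\ $\mathrm{Laplace}(c)$ noise before taking the $\argmax$ is the classical ``report-noisy-max'' mechanism. Writing $p_i(t) := \PP_\zeta\rbr{\SelL = i \mid Y_i = t,\, Y_{-i}}$ and using that the $\mathrm{Laplace}(c)$ survival function has hazard rate bounded by $1/c$, I would establish the Lipschitz-type stability bound $p_i(t+h) \le e^{h/c} p_i(t)$ for $h \ge 0$; over a plausible range of $Y_i$ of standardized half-width $z_{1-\gamma}$, with $\gamma := \alpha(\alpha-\nu)/(2n)$, this delivers privacy level $\eta$ exactly when $\eta c \ge 2 z_{1-\gamma}$, which is the stated hypothesis. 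For $\delta$: this Lipschitz/sensitivity control only holds while every coordinate $Y_j$ remains in its plausible range, an event whose complement has probability $2\gamma$ per coordinate; a union bound over the $n$ coordinates charges the out-of-range event to the additive slack $\delta = 2n\gamma = \alpha(\alpha-\nu)$, matching the arithmetic above.

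I expect the main obstacle to be the careful quantification of the stability (privacy) parameters of this \emph{truncated} Laplace noisy-max: making the plausible-range truncation rigorous so that the escape event is correctly charged to $\delta$ rather than to $\eta$, and confirming that the transfer inequality applies with the selection being private precisely with respect to the data $Y_i$ entering each test statistic. Once these parameters are pinned down, the budget split closes the argument exactly, with no slack to spare.
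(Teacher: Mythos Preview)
Your proposal is correct and follows essentially the same route as the paper. The paper does not give an independent proof of this proposition; it attributes the result to \citet{zrnic_post-selection_2023} and summarizes their algorithmic-stability framework in the appendix (Definition~\ref{def:stability} and Theorem~\ref{thm:alg_stability}). Your sketch is precisely that framework specialized to the report-noisy-max selection: the transfer inequality you write is Theorem~\ref{thm:alg_stability} with $\tau+\nu_{\mathrm{stab}}=\alpha(\alpha-\nu)$, the high-probability truncation to the range $[\mu_i \pm \sigma z_{1-\gamma}]$ with $\gamma=\alpha(\alpha-\nu)/(2n)$ supplies the $\nu_{\mathrm{stab}}$-piece via a union bound, and the Laplace noisy-max sensitivity bound $2z_{1-\gamma}/c \le \eta$ supplies the multiplicative $e^{\eta}$-piece; the budget then closes exactly as you compute.
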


In the context of \cref{thm:vignette-1-laplace-AS}, the selection algorithm in \cref{eq:vignette-1-selection-laplace} is ``stable'' at level $\eta$ with probability at least $1-\nu$. We defer a more detailed discussion to Section A in the online supplement.

We will now derive an alternative interval that makes use of \emph{exactly} the same selection event, \eqref{eq:vignette-1-selection-laplace}, using ideas from the data fission proposal of \cite{leiner_data_2023}. 
Towards this end, note that $\SelL(Y) = \Sel(\Ytrain)$ where $\Ytrain := Y + \zeta$.  
Since $Y\mid \Ytrain$ does not admit a tractable distribution, we will additionally condition on $\rbr{\sign(\zeta_1),\ldots,\sign(\zeta_n)}$. This yields a multivariate truncated normal distribution with independent marginals. 

\begin{proposition}[A data fission interval based on \eqref{eq:vignette-1-selection-laplace}]\label{prop:vignette-1-laplace-DF}
    Under \eqref{eq:vignette-1}, consider the selection rule $\SelL(Y)$ defined in \eqref{eq:vignette-1-selection-laplace}. Define $\delta_i := \sign(\zeta_i)$ and $A_i := \{y_i \in \RR: \delta_i y_i \leq \delta_i (Y_i + \zeta_i) \}$. Let $\Phi_{\mu, \sigma^2, A_i}(\cdot)$ denote the CDF of a $\Norm(\mu, \sigma^2)$ distribution truncated to the set $A_i$, and define $a_i$ and $b_i$ to satisfy
    \begin{equation}
        \Phi_{a_i + \delta_{i} \sigma^2 / c, \sigma^2, A_i}(Y) = 1 - \alpha / 2
        \label{eq:vignette-1-a}
    \end{equation}
    and
    \begin{equation}
        \Phi_{b_i + \delta_{i} \sigma^2 / c, \sigma^2, A_i}(Y) = \alpha/2.
        \label{eq:vignette-1-b}
    \end{equation}
    The confidence interval 
     \begin{equation} 
        \label{eq:vignette-1-CI-LDF}
       \CI_{\SelL(Y)}^{\alpha}(Y)
       :=   \sbr{ a_{\SelL(Y)}, b_{\SelL(Y)} }
    \end{equation}
    has $1-\alpha$ conditional coverage for $\mu_{\SelL(Y)}$ in the sense of~\cref{eq:conditional_coverage} with $\Ccal(Y) = \rbr{Y+\zeta, \delta}.$
\end{proposition}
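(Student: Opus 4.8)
The plan is to identify the conditional law of $Y$ given $\Ccal(Y) = \rbr{Y+\zeta,\delta}$ and then build the interval by a monotone-pivot argument. Writing $W := Y+\zeta = \Ytrain$, the first and main step is to show that, conditional on $\rbr{W,\delta}$, the coordinates $Y_1,\ldots,Y_n$ are independent with $Y_i \mid \rbr{W,\delta} \sim \Norm\rbr{\mu_i + \delta_i\sigma^2/c,\ \sigma^2}$ truncated to $A_i$. To see this I would start from the joint density of $\rbr{Y_i,\zeta_i}$, a product of a Gaussian and a $\mathrm{Laplace}(c)$ density, change variables to $\rbr{Y_i,W_i}$ via $\zeta_i = W_i - Y_i$, and restrict to the sign $\delta_i = \sign\rbr{W_i-Y_i}$. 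On the sign event $\abr{\zeta_i} = \delta_i\zeta_i = \delta_i\rbr{W_i-Y_i}$, so the conditional density in $Y_i$ is proportional to $\exp\rbr{-\rbr{Y_i-\mu_i}^2/(2\sigma^2) + \delta_i Y_i/c}$ on the half-line $A_i = \{y_i: \delta_i y_i \leq \delta_i W_i\}$; completing the square absorbs the linear term into a mean shift of exactly $\delta_i\sigma^2/c$, which is precisely the shift appearing in~\cref{eq:vignette-1-a,eq:vignette-1-b}.

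The second step exploits that, conditional on $\Ccal(Y)$, the selected index $j := \SelL(Y) = \argmax_i W_i$ is a deterministic function of $W$, hence fixed, and that $\sigma\rbr{\Ccal(Y)} \supseteq \sigma\rbr{\SelL(Y)}$, so the conditioning is at least as strong as conditioning on the selection event. I would then define the pivot $U_i := \Phi_{\mu_i + \delta_i\sigma^2/c,\ \sigma^2,\ A_i}\rbr{Y_i}$. Because the observed $Y_i$ lies in $A_i$ (as $\delta_i\rbr{W_i-Y_i} = \abr{\zeta_i}\geq 0$) and the truncated-normal CDF is continuous, the probability integral transform gives $U_i \mid \rbr{W,\delta} \sim \mathrm{Uniform}(0,1)$ for each $i$, and in particular for the fixed selected index $j$. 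The crucial point is that no selection bias arises: since $j$ is measurable with respect to the conditioning, $U_j$ remains conditionally uniform.

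The third step converts the uniform pivot into the interval. I would verify that $m \mapsto \Phi_{m,\ \sigma^2,\ A_i}\rbr{Y_i}$ is strictly decreasing, so that $U_i$ is strictly decreasing in $\mu_i$; this yields the equivalences $\{U_i \leq 1-\alpha/2\} = \{\mu_i \geq a_i\}$ and $\{U_i \geq \alpha/2\} = \{\mu_i \leq b_i\}$, with $a_i,b_i$ the (necessarily ordered, $a_i < b_i$) solutions of~\cref{eq:vignette-1-a,eq:vignette-1-b}. Hence $\{\mu_j \in \sbr{a_j,b_j}\} = \{U_j \in \sbr{\alpha/2,\ 1-\alpha/2}\}$, and since $U_j$ is conditionally uniform this event has conditional probability exactly $1-\alpha$, which is~\cref{eq:conditional_coverage} with $\Ccal(Y) = \rbr{Y+\zeta,\delta}$.

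I expect the main obstacle to be the density computation in the first step: one must handle the change of variables and the sign conditioning carefully to confirm that the Laplace tail contributes exactly the linear exponent $\delta_i Y_i/c$ and that the truncation set is precisely $A_i$, so that the completed-square mean is $\mu_i + \delta_i\sigma^2/c$. Once that identification is in hand, the remaining monotone-pivot steps are routine.
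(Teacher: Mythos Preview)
Your proposal is correct and follows essentially the same route as the paper: the paper's Lemma first derives $Y\mid(\Ytrain,\delta)\sim\Norm_{n,A}(\mu+c^{-1}\sigma^2\delta,\sigma^2 I_n)$ by a Bayes-rule density computation equivalent to your change-of-variables argument, then notes the diagonal covariance yields independent truncated-normal marginals and states that the interval follows from the truncated-normal CDF. You spell out the monotone-pivot step (probability integral transform, strict monotonicity of $m\mapsto\Phi_{m,\sigma^2,A_i}(Y_i)$, inversion) more explicitly than the paper does, but the underlying argument is the same.
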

Computing \eqref{eq:vignette-1-CI-LDF} requires numerically solving \eqref{eq:vignette-1-a} for $a_{i}$ and \eqref{eq:vignette-1-b} for $b_{i}$, where $i = \SelL(Y)$. 

Thus, both \eqref{eq:vignette-1-CI-LAS} and \eqref{eq:vignette-1-CI-LDF} provide confidence intervals for $\mu_{\SelL(Y)}$ that attain $1-\alpha$ unconditional coverage. We should prefer the approach that yields narrower intervals. 
Figure~\ref{fig:vignette-1}(a) displays the ratio of the widths of the two intervals when $\mu = 0$ and $\sigma^2 = 1$, as we vary the sample size $n$ and the variance $2c^2$ of the Laplace noise in \eqref{eq:vignette-1-selection-laplace}.
 
The infer-and-widen interval \eqref{eq:vignette-1-CI-LAS} involves tuning parameters $\nu$ and $\eta$; to construct Figure~\ref{fig:vignette-1}(a), for a fixed value of the noise $c$ and sample size $n$ in \eqref{eq:vignette-1-selection-laplace}, we select the $\nu$ and $\eta$ that minimize the confidence interval width, subject to the constraints in Proposition~\ref{thm:vignette-1-laplace-AS}. The data fission intervals \eqref{eq:vignette-1-CI-LDF} are much narrower than the infer-and-widen intervals \eqref{eq:vignette-1-CI-LAS}, across all values of $n$ and $c$ considered. 

It is possible that the width of the infer-and-widen interval \eqref{eq:vignette-1-CI-LAS} is due to the fact that \cite{zrnic_post-selection_2023} apply a series of inequalities in its derivation; furthermore, it is natural to wonder whether another infer-and-widen proposal could perform better. Therefore, we now consider the ``oracle'' infer-and-widen interval:  this is the narrowest possible interval centered at $Y_{\SelL(Y)}$ that achieves a given coverage. It is not computable in practice, as it requires knowledge of the true parameter value $\mu$.
In Figures~\ref{fig:vignette-1}(b, c), we numerically compare the ``oracle'' infer-and-widen method to the data fission interval \eqref{eq:vignette-1-CI-LDF} under the  selection rule \eqref{eq:vignette-1-selection-laplace} with variances $2c^2=0.33$ and $3$, respectively. The former method has a wider average width than the latter: thus, \emph{no infer-and-widen interval can outperform the data fission interval \eqref{eq:vignette-1-CI-LAS}} in this setting.

Section C of the supplementary materials contains a related comparison of methods for inference after non-randomized selection using \cref{eq:vignette-1-selection}. The conclusions are the same.

\begin{figure}[!htb]
    \centering
    \includegraphics[width=\linewidth]{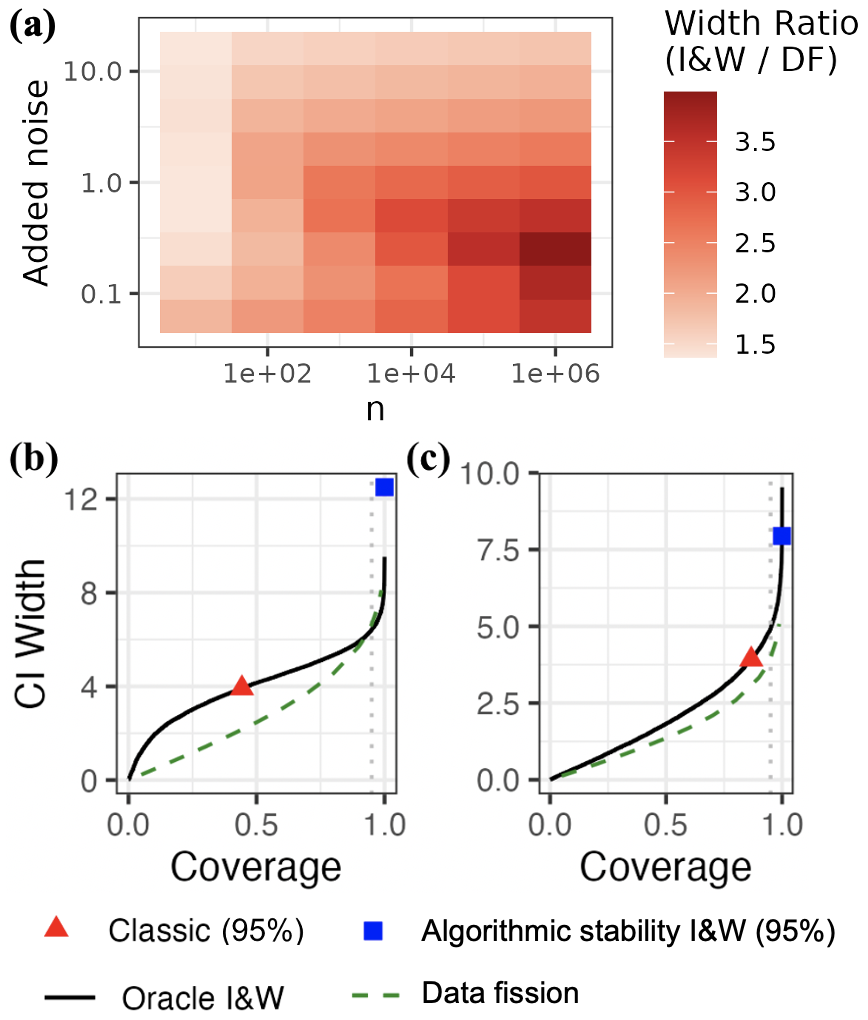}  
    \caption{Vignette \#1 under \eqref{eq:vignette-1} with $\mu=0$, $\sigma=1$, and the selection rule in~\cref{eq:vignette-1-selection-laplace}. Results are averaged over (a) $250$ or (b, c) $1000$ simulated datasets. \textbf{(a):} We display the ratio of the average widths of the infer-and-widen (I\textup{\symbol{`\&}}W) \cref{eq:vignette-1-CI-LAS} and data fission (DF)~\cref{eq:vignette-1-CI-LDF} intervals with $95\%$ coverage across values of the sample size $n$ and Laplacian noise variance $2c^2$. A width ratio that exceeds one implies that the infer-and-widen interval is wider. \textbf{(b, c):} Fixing $n=100$, we display the average width of the ``oracle'' infer-and-widen interval and the data fission interval~\cref{eq:vignette-1-CI-LDF} as a function of coverage. The 95\% classical interval and the ``attainable'' 95\% infer-and-widen interval based on algorithmic stability~\cref{eq:vignette-1-CI-LAS} are also displayed. The Laplacian noise variance is low and high in panels (b) and (c), respectively.
    }
    \label{fig:vignette-1}
\end{figure}

\section{Vignette \#2: Maximal contrasts}\label{sec:vignette-2}

Following~\citet{zrnic_post-selection_2023}, we consider the model~\cref{eq:vignette-1} and suppose we also have access to a fixed design matrix with normalized columns $\nbr{X_1}_2 = \dots = \nbr{X_p}_2 = 1$.
Our interest lies in the quantity $X_{\Seltwo(Y)}^\top \mu$ defined via the selection rule
\begin{equation}\label{eq:vignette-2-selection}
    \Seltwo(Y) := \argmax_{j \in [p]} \abr{X_j^\top Y};
\end{equation}
\cite{zrnic_post-selection_2023} refer to this as the ``maximal contrast''. If $X=I$, then this setting is similar to Vignette \#1, except with an absolute value in the selection rule.

To achieve a ``stable'' algorithm, \cite{zrnic_post-selection_2023} consider the selection rule
\begin{equation} \label{eq:vignette-2-selection-laplace}
    \SelLtwo(Y) := \argmax_{j \in [p]} \abr{X^\top_j Y +  \zeta_j},
\end{equation}
where $\zeta_1,\ldots,\zeta_p \iidsim \mathrm{Laplace}\left(c\right)$ and $c$ is a positive constant that governs the scale of the additive noise.

\begin{proposition}[An infer-and-widen interval based on \eqref{eq:vignette-2-selection-laplace}, given by~\citet{zrnic_post-selection_2023}]\label{thm:vignette-2-laplace-AS}
    Under \eqref{eq:vignette-1},  consider the Laplacian selection rule $\SelLtwo(Y)$ defined in \eqref{eq:vignette-2-selection-laplace}. For any $\eta > 0$ and $\nu \in (0, 1)$ such that $c \geq 2 z_{1 - \alpha\nu/(2p) }/\eta$, the infer-and-widen confidence interval
    \begin{equation} \label{eq:vignette-2-CI-LAS}
        \CI_{{{\SelLtwo}(Y)}}^{\alpha(1-\nu)e^{-\eta}}(Y) := \left[  X^\top_{\SelLtwo(Y)} Y \pm \sigma z_{1-\alpha(1-\nu)e^{-\eta}/2}  \right]
    \end{equation}
    has $1-\alpha$ unconditional coverage for $X^\top_{\SelLtwo(Y)}\mu$ in the sense of~\cref{eq:infer_and_widen_ci}.
\end{proposition}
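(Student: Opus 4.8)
The plan is to reconstruct the algorithmic-stability argument of \citet{zrnic_post-selection_2023} that underlies the analogous Proposition~\ref{thm:vignette-1-laplace-AS}, now accounting for the design $X$ and the absolute value in the selection rule \eqref{eq:vignette-2-selection-laplace}. Write $W_j := X_j^\top Y$, abbreviate $T := \SelLtwo(Y)$, and set $H := \alpha(1-\nu)e^{-\eta}$, the level at which each classical interval is formed. The first step is to record the per-coordinate coverage fact: because the columns are normalized, $W_j = X_j^\top Y \sim \Norm(X_j^\top\mu, \sigma^2)$ for each \emph{fixed} $j$, so the event $E_j := \{ X_j^\top\mu \notin [ X_j^\top Y \pm \sigma z_{1-H/2} ] \} = \{\abr{X_j^\top(Y-\mu)} > \sigma z_{1-H/2}\}$ has probability exactly $H$. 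Decomposing the miscoverage over the value of the randomized selection,
\[
\PP\rbr{ X_T^\top\mu \notin \CI_T^{H}(Y) } = \mathbb{E}\!\left[ \sum_{j=1}^p \mathbf{1}\{E_j\}\, \PP\rbr{ T = j \mid Y } \right],
\]
so it suffices to bound this by $\alpha$.

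The crux is a stability (report-noisy-max) bound showing that the data-dependent selection probability $\PP(T=j\mid Y)$ is, on a high-probability event, within a factor $e^\eta$ of its value at a fixed reference. I would fix the reference statistics $W^0 := (X_1^\top\mu,\dots,X_p^\top\mu)$, set $q_j^0 := \PP(T=j\mid W=W^0)$ (so $\sum_j q_j^0 = 1$), and define the good event $G := \{ \max_{j} \abr{X_j^\top(Y-\mu)} \le \eta c/2 \}$, i.e. $\|W-W^0\|_\infty \le \eta c/2$. Granting the claim that $\PP(T=j\mid Y)\le e^\eta q_j^0$ for all $j$ simultaneously on $G$, I would split the expectation above over $G$ and $G^c$. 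On $G$, pulling out the constant $q_j^0$ and using $\mathbf{1}\{G\}\le 1$ and $\PP(E_j)=H$ gives
\[
\mathbb{E}\!\left[ \mathbf{1}\{G\}\sum_{j} \mathbf{1}\{E_j\}\, \PP\rbr{T=j\mid Y} \right] \le e^\eta \sum_{j} q_j^0\, \PP(E_j) = e^\eta H = \alpha(1-\nu),
\]
while on $G^c$ the inner sum is at most $\sum_j \PP(T=j\mid Y)=1$, contributing at most $\PP(G^c)$. It then remains to force $\PP(G^c)\le\alpha\nu$: since $X_j^\top(Y-\mu)\sim\Norm(0,\sigma^2)$, a two-sided tail bound and a union bound over the $p$ coordinates give $\PP(G^c)\le 2p\,[1-\Phi(\eta c/2)]$ (taking $\sigma=1$, which recovers the stated constraint; a general $\sigma$ merely rescales it), and this is $\le\alpha\nu$ exactly when $\eta c/2 \ge z_{1-\alpha\nu/(2p)}$, i.e. $c\ge 2z_{1-\alpha\nu/(2p)}/\eta$, the hypothesis of the Proposition. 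Adding the two pieces yields total miscoverage $\le\alpha(1-\nu)+\alpha\nu=\alpha$.

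The main obstacle is proving the stability claim $\PP(T=j\mid Y)\le e^\eta q_j^0$ on $G$, since the selection uses $\abr{X_j^\top Y+\zeta_j}$ rather than a plain maximum. I would argue by the standard report-noisy-max device: condition on the noise $\zeta_{-j}$ in all coordinates but $j$, write $M_{-j}:=\max_{k\ne j}\abr{W_k+\zeta_k}$, and observe that $T=j$ iff $\abr{W_j+\zeta_j}>M_{-j}$, whose probability over $\zeta_j\sim\mathrm{Laplace}(c)$ is a sum of two Laplace tail probabilities. Passing from $W$ to $W^0$ shifts the threshold for $\zeta_j$ by at most $\|W-W^0\|_\infty$ and shifts $M_{-j}$ (which is $1$-Lipschitz in each $W_k$, as is $\abr{\cdot}$) by at most $\|W-W^0\|_\infty$; each source multiplies the Laplace tail by at most $e^{\|W-W^0\|_\infty/c}$, for a combined factor $e^{2\|W-W^0\|_\infty/c}\le e^{\eta}$ on $G$, and integrating back over $\zeta_{-j}$ preserves the bound. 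Care is needed with the two sign branches introduced by the absolute value and with ties in the $\argmax$ (a measure-zero event under continuous noise), but neither affects the constant. This ``two sources of sensitivity'' is precisely the origin of the factor $2$ in the constraint $c\ge 2z_{1-\alpha\nu/(2p)}/\eta$, mirroring its appearance in Proposition~\ref{thm:vignette-1-laplace-AS}.
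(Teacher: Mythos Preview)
Your proposal is correct and matches the paper's approach: the paper does not supply its own proof of this proposition but attributes it to \citet{zrnic_post-selection_2023}, whose algorithmic-stability machinery it summarizes in \cref{def:stability} and \cref{thm:alg_stability} of the appendix. Your construction --- the reference $S_0=\argmax_j\abr{W_j^0+\zeta_j}$, the good event $G$, and the report-noisy-max bound yielding the factor $e^{2\|W-W^0\|_\infty/c}\le e^\eta$ --- is exactly the verification that $\SelLtwo$ is $(\eta,0,\alpha\nu)$-stable in the sense of \cref{def:stability}, and your split over $G$ and $G^c$ is the content of \cref{thm:alg_stability} unpacked.
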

We will take a randomized conditional inference approach~\citep{tian_selective_2018, panigrahi_exact_2024} to derive an alternative confidence interval that makes use of the identical selection event.

\begin{proposition}[A randomized conditional selective inference interval based on~\cref{eq:vignette-2-selection-laplace}]\label{prop:vignette-2-CI-LDF}
    Consider the model in~\cref{eq:vignette-1} and selection rule $\SelLtwo(Y)$ in~\cref{eq:vignette-2-selection-laplace}. Define
    \begin{align*}
        W &:= (I_n - X_{\SelLtwo(Y)} X_{\SelLtwo(Y)}^\top) Y, \\
        \delta &:= \sign\rbr{X^\top_{\SelLtwo(Y)} Y + \zeta_{\SelLtwo(Y)}},
    \end{align*}
    and $a_j$ and $b_j$ to satisfy
    \begin{equation}
        \Phi_{a_j, \sigma^2, [v_{\min}, v_{\max}]}(X_j^\top Y) = 1 - \alpha / 2
        \label{eq:vignette-2-a}
    \end{equation}
    and
    \begin{equation}
        \Phi_{b_j, \sigma^2, [v_{\min}, v_{\max}]}(X_j^\top Y) = \alpha / 2,
        \label{eq:vignette-2-b}
    \end{equation}
    where $\Phi_{\lambda, \sigma^2, [v_{\min}, v_{\max}]}(\cdot)$ is defined in~\cref{prop:vignette-1-laplace-DF}, and $v_{\min}$ and $v_{\max}$ are defined in~\cref{sec:deferred}. The confidence interval
     \begin{equation} 
        \label{eq:vignette-2-CI-LDF}
       \CI_{\SelLtwo(Y)}^{\alpha} \left(Y \right)
       := \sbr{ a_{\SelLtwo(Y)}, b_{\SelLtwo(Y)} }
    \end{equation}
    has $1-\alpha$ conditional coverage for $X_{\SelL(Y)}^\top \mu$ in the sense of~\cref{eq:conditional_coverage} with $\Ccal(Y) = \rbr{\SelLtwo(Y), \zeta, W, \delta}$.
\end{proposition}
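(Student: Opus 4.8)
The plan is to condition away all randomness except the scalar $X_{\SelLtwo(Y)}^\top Y$, reduce the randomized selection event to a one-dimensional truncation, and then invert a truncated-normal pivot, in the spirit of the polyhedral conditioning of \citet{lee_exact_2016} adapted to the noisy rule \eqref{eq:vignette-2-selection-laplace}. Write $j := \SelLtwo(Y)$ and $t := X_j^\top Y$. First I would decompose $Y$ along the selected column: since $\nbr{X_j}_2 = 1$, we have $Y = X_j\,(X_j^\top Y) + (I_n - X_j X_j^\top) Y = X_j\, t + W$, so $(t, W)$ determines $Y$. The pair $(t, W)$ is jointly Gaussian with $\mathrm{Cov}(t, W) = \sigma^2 X_j^\top (I_n - X_j X_j^\top) = 0$ (using $X_j^\top X_j = 1$), hence $t \independent W$ with $t \sim \Norm(X_j^\top \mu, \sigma^2)$; and since $\zeta \independent Y$ by construction, $t$ is independent of the entire conditioning vector $(\zeta, W)$.

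Next I would rewrite the conditioning events as constraints on $t$ alone. For each $k \neq j$, once $W$ is fixed, $X_k^\top Y = (X_k^\top X_j)\,t + X_k^\top W$ is affine in $t$. Conditioning on $\SelLtwo(Y) = j$ requires $\abr{t + \zeta_j} \ge \abr{(X_k^\top X_j)\,t + X_k^\top W + \zeta_k}$ for all $k \neq j$, while conditioning on $\delta = \sign(t + \zeta_j)$ replaces the left-hand side by $\delta(t + \zeta_j)$. With $\zeta$ and $W$ held fixed, each competitor constraint has the form $\{t : \delta(t + \zeta_j) \ge \abr{\mathrm{affine}(t)}\}$, i.e. the super-level set of a concave (line-minus-$\abr{\cdot}$) function, which is an interval; intersecting over $k \neq j$ yields the interval $[v_{\min}, v_{\max}]$ whose endpoints are recorded in \cref{sec:deferred}. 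Thus the event $\Ccal(Y) = (\SelLtwo(Y), \zeta, W, \delta)$ is, given the fixed quantities, exactly $\{t \in [v_{\min}, v_{\max}]\}$.

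Combining the two steps, conditional on $\Ccal(Y)$ the scalar $t$ follows a $\Norm(X_j^\top \mu, \sigma^2)$ law truncated to $[v_{\min}, v_{\max}]$; crucially, the mean and variance are inherited unchanged from the marginal law of $t$ because $t \independent (\zeta, W)$ and the remaining conditioning only truncates. Consequently $\Phi_{X_j^\top \mu, \sigma^2, [v_{\min}, v_{\max}]}(t)$ is a conditional pivot, uniform on $(0,1)$ by the probability integral transform, and $X_j^\top \mu$ is a fixed parameter once we condition on $\SelLtwo(Y) = j$. Because the truncated-normal CDF $\Phi_{\lambda, \sigma^2, [v_{\min}, v_{\max}]}(t)$ is continuous and strictly decreasing in its mean $\lambda$, the confidence set $\{\lambda : \alpha/2 \le \Phi_{\lambda, \sigma^2, [v_{\min}, v_{\max}]}(t) \le 1 - \alpha/2\}$ is precisely the interval $[a_j, b_j]$ determined by \eqref{eq:vignette-2-a} and \eqref{eq:vignette-2-b}, and it attains conditional coverage $1 - \alpha$ for $X_j^\top \mu$ in the sense of \eqref{eq:conditional_coverage} with $\Ccal(Y) = (\SelLtwo(Y), \zeta, W, \delta)$.

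The main obstacle is the bookkeeping in the second step: verifying that once $(\SelLtwo(Y), \zeta, W, \delta)$ are fixed, the randomized $\argmax$ of absolute values together with the sign constraint collapses to a single interval in $t$ rather than a disconnected region. This is the analogue of the polyhedral lemma of \citet{lee_exact_2016}, but the absolute values in \eqref{eq:vignette-2-selection-laplace} force each competitor constraint to be treated as the super-level set of a concave function, and one must confirm that these are genuine intervals and that their endpoints admit the closed forms $v_{\min}$ and $v_{\max}$. The remaining ingredients --- the independence $t \independent (\zeta, W)$ and the monotonicity of the truncated-normal CDF in its mean --- are standard.
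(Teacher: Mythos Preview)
Your proposal is correct and follows the same underlying idea as the paper: condition on $(\SelLtwo(Y),\zeta,W,\delta)$, reduce the event to a one-dimensional truncation of the scalar $t=X_j^\top Y$, and invert the truncated-normal pivot. The difference is purely presentational. The paper first rewrites the selection event as a polyhedron $\delta A X^\top Y \le -\delta A\zeta$ by splitting each absolute-value competitor into two linear inequalities, then invokes Theorem~5.2 of \citet{lee_exact_2016} as a black box to obtain the truncated-normal law and the explicit formulas for $v_{\min},v_{\max}$. You instead argue directly that, with $(\zeta,W,\delta)$ fixed, each constraint $\delta(t+\zeta_j)\ge |(X_k^\top X_j)t + X_k^\top W + \zeta_k|$ is the super-level set of a concave function of $t$ and hence an interval, so the intersection is an interval. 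Your route is slightly more self-contained and avoids writing down the matrix $A$; the paper's route makes the link to the polyhedral lemma explicit and immediately yields closed-form expressions for the truncation endpoints. Both arguments rely on exactly the same independence fact ($t\independent(\zeta,W)$ via $X_j^\top(I_n-X_jX_j^\top)=0$ and $\zeta\independent Y$) and the same monotonicity of the truncated-normal CDF in its mean.
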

Computing \cref{eq:vignette-2-CI-LDF} requires numerically solving \cref{eq:vignette-2-a} for $a_{i}$ and \cref{eq:vignette-2-b} for $b_{i}$, where $i = \SelLtwo(Y)$. 

We simulate data according to \cref{eq:vignette-1} with $n=100$ and $\sigma=1$. We set $\mu = X \phi$ where half of the elements of $\phi$ are zero and the other half are drawn from an exponential distribution with mean $50/7$. The rows of $X$ are independent draws from $\Norm_p(0, \rho (1_p1_p^\top  - I_p) + I_p)$ with $\rho=0.5$, and its columns are subsequently scaled so that $\nbr{X_1}_2 = \dots = \nbr{X_p}_2 = 1$. \cref{fig:vignette-2}(a) displays the ratio of confidence interval widths across values of $p$ and the variance of the Laplacian noise $2c^2$ in \cref{eq:vignette-2-selection-laplace}.
The infer-and-widen interval \eqref{eq:vignette-2-CI-LDF} involves tuning parameters $\nu$ and $\eta$; to construct Figure~\ref{fig:vignette-2}(a), for a fixed value of the noise $c$ and number of features $p$ in \eqref{eq:vignette-2-selection-laplace}, we select the $\nu$ and $\eta$ that minimize the confidence interval width, subject to the constraints in Proposition~\ref{thm:vignette-2-laplace-AS}. The randomized conditional selective inference intervals are narrower than the infer-and-widen intervals across most values of $p$ and $c$ considered.

Once again, we consider the ``oracle'' infer-and-widen interval:  this is the narrowest possible interval centered at $X_{\SelLtwo(Y)}^\top Y$ that achieves a given coverage.
In Figures~\ref{fig:vignette-2}(b, c), we numerically compare the ``oracle'' infer-and-widen method to the randomized conditional selective inference interval \eqref{eq:vignette-2-CI-LDF} under the selection rule \eqref{eq:vignette-2-selection-laplace} with variances $2c^2=0.33$ and $3$, respectively. In this setting, the ``oracle'' infer-and-widen interval is narrower than the randomized conditional selective inference interval in \eqref{eq:vignette-2-CI-LDF}, thereby indicating that in principle, an infer-and-widen interval can outperform the randomized conditional selective inference interval \eqref{eq:vignette-2-CI-LDF} in this setting. However, to the best of our knowledge, no methods to construct such a narrower infer-and-widen interval are currently available.

\begin{figure}[!htb]
    \centering
    \includegraphics[width=\linewidth]{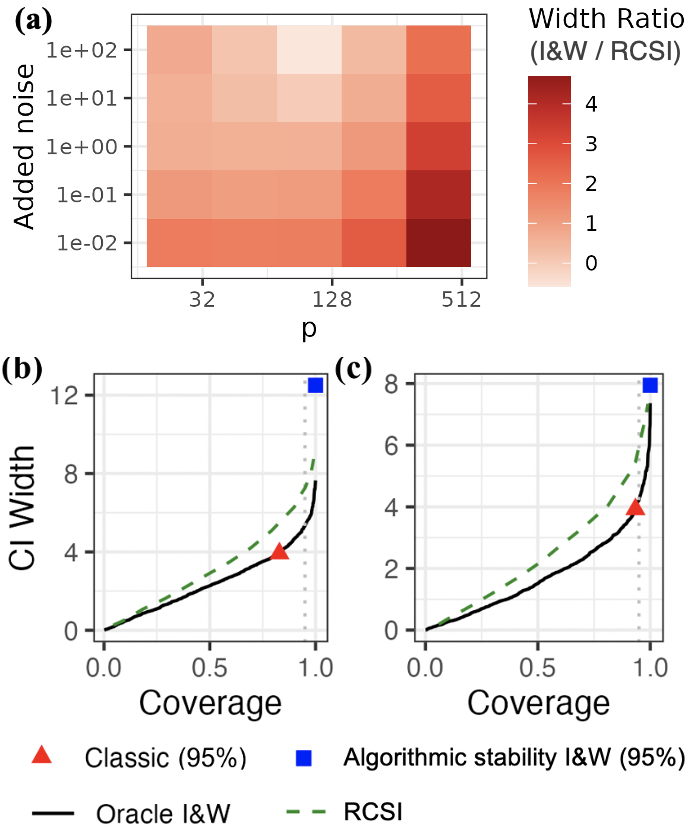}  
    \caption{Vignette \#2 under \eqref{eq:vignette-1} with $n=100$, $\sigma=1$, $\mu=X \phi$, and the selection rule in~\cref{eq:vignette-2-selection-laplace}, where $X$ is multivariate normal with column correlation $0.5$, and $\phi$ is sparse with exponentially distributed non-zero elements. Results are averaged over (a) $250$ or (b, c) $1000$ simulated datasets. \textbf{(a):} We display the ratio of the average widths of infer-and-widen (I\textup{\symbol{`\&}}W)~\cref{eq:vignette-2-CI-LAS} and randomized conditional selective inference (RCSI)~\cref{eq:vignette-2-CI-LDF} intervals with $95\%$ coverage across the number of features $p$ and Laplacian noise variance $2c^2$. \textbf{(b, c):} Fixing $p=100$, we display the average width of the ``oracle'' infer-and-widen interval and the RCSI interval~\cref{eq:vignette-2-CI-LDF} as a function of coverage. The 95\% classical interval and the ``attainable'' 95\% infer-and-widen interval~\cref{eq:vignette-2-CI-LAS} based on algorithmic stability are also displayed. The Laplacian noise variance is low and high in panels (b) and (c), respectively.
    }
    \label{fig:vignette-2}
\end{figure}

\section{Vignette \#3: Inference after the lasso}\label{sec:vignette-3}

We now consider inference on the set of coefficients selected by the lasso in a linear model. We again consider the model~\cref{eq:vignette-1}, alongside a fixed design matrix $X \in \RR^{n \times p}$. Our selection rule is the set of features selected by the lasso estimator $\hat\beta^{(\lambda)}$~\citep{tibshirani1996regression}:
\begin{equation}
\label{eq:vignette-3-selection}
\Selthree(Y) := \supp\rbr{\hat\beta^{(\lambda)}},
\end{equation}
where
\begin{equation*}
    \hat\beta^{(\lambda)} := \argmin_\beta \left\{ \tfrac1n \nbr{ Y - X \beta }^2_2 + \lambda \|  \beta \|_1 \right\}\nonumber.
\end{equation*}
This is a ``non-randomized'' selection event, and we defer to Section E of the supplement for inference after a related randomized selection event.

Our goal is valid inference on the least squares coefficients in the model selected by $\Selthree(Y)$:
\begin{equation*}
    \beta_{\Selthree(Y)} :=  \rbr{X_{\Selthree(Y)}^\top X_{\Selthree(Y)}}^{-1} X_{\Selthree(Y)}^\top \EE\sbr{Y},
\end{equation*}
where $X_{\Selthree(Y)}$ is the subset of columns of $X$ indexed by $\Selthree(Y)$. Since $\Selthree(Y)$ is a data-adaptive model, the naive point estimator
\begin{equation*}
    \hat \beta_{\Selthree(Y)} :=  \rbr{X_{\Selthree(Y)}^\top X_{\Selthree(Y)}}^{-1} X_{\Selthree(Y)}^\top Y
\end{equation*}
is no longer guaranteed to be unbiased for $\beta_{\Selthree(Y)}$.

Obtaining valid inference on $\beta_{\Selthree(Y)}$ in the sense of \cref{eq:unconditional_coverage} is a classic problem in the field of selective inference, and there are two main categories of methods which do so. The first category guarantees unconditional coverage, \cref{eq:unconditional_coverage}, via the stronger goal of simultaneous coverage. Since the Bonferroni correction~\citep{dunn_multiple_1961} is valid but would require an overly conservative adjustment for all $O(2^p)$ possible models, \citet{berk_valid_2013} derive the narrower PoSI simultaneous correction. \citet{zrnic_locally_2024} subsequently developed the locally simultaneous inference approach which focuses the simultaneous correction on only a subset of likely models, thereby often improving power; these two approaches yield infer-and-widen intervals centered at $\hat \beta_{\Selthree(Y)}$ and with widths proportional to the diagonal elements of $\rbr{X_{\Selthree(Y)}^\top X_{\Selthree(Y)}}^{-1/2}$ with a shared constant of proportionality.
The second category of methods approach unconditional coverage via the stronger goal of conditional coverage, \cref{eq:conditional_coverage}. The foundational work of \citet{lee_exact_2016} characterized the selection event $\Selthree(Y)$, and used this to derive the conditional distribution $\hat \beta_{\Selthree(Y)} \mid \Ccal(Y)$ for inference on $\beta_{\Selthree(Y)}$, where the sigma algebra of $\Ccal(Y)$ is a superset of that of $\Scal(Y)$. In order to improve upon the power of this method, \citet{mccloskey_hybrid_2024} developed the hybrid method, which uses the conditional confidence interval in cases where there is strong signal, and otherwise resorts to unconditional intervals; thus the hybrid method provides only unconditional and not conditional coverage.

As before, we compare the widths of these methods' confidence intervals, as well as the ``oracle'' infer-and-widen interval: the narrowest possible interval centered at $\hat \beta_{\Selthree(Y)}$ and with width proportional to the diagonal elements of $\rbr{X_{\Selthree(Y)}^\top X_{\Selthree(Y)}}^{-1/2}$ with a shared constant of proportionality that attains coverage at a given level. We simulate data according to \cref{eq:vignette-1} with $n=100$, $p=10$, $\sigma=1$, and fixed penalty $\lambda$. We set $\mu = X \beta$, where $\beta_1 = \dots = \beta_5$ and $\beta_6 = \dots = \beta_{10} = 0$. The rows of $X$ are independent draws from $\Norm_p(0, \rho (1_p1_p^\top  - I_p) + I_p)$ with correlation $\rho$.
\cref{fig:vignette-3}(a) displays the ratio of average $95\%$ confidence interval widths of the locally simultaneous inference (LSI)~\citep{zrnic_locally_2024} and hybrid~\citep{mccloskey_hybrid_2024} methods across values of $\rho$ and $\beta_1 / \lambda$. A ratio greater than one indicates that the LSI method has a wider average interval width. In \cref{fig:vignette-3}(b, c), we fix $\rho = 0.9$, and set $\beta_1 /\lambda = 0$ in panel (b) and $\beta_1 / \lambda = 1$ in panel (b). We numerically compare the ``oracle'' interval to the conditional~\citep{lee_exact_2016} and hybrid~\citep{mccloskey_hybrid_2024} intervals. For reference, we also display the average empirical widths and coverages of the $95\%$ intervals from the classic, simultaneous PoSI (SI)~\citep{berk_valid_2013}, and LSI~\citep{zrnic_locally_2024} approaches. In this setting, the ``oracle'' infer-and-widen interval is the narrowest interval, thereby indicating that in principle, an infer-and-widen interval can outperform the hybrid interval. However, the hybrid method outperforms both existing infer-and-widen methods.

\begin{figure}[!htb]
    \centering
    \includegraphics[width=\linewidth]{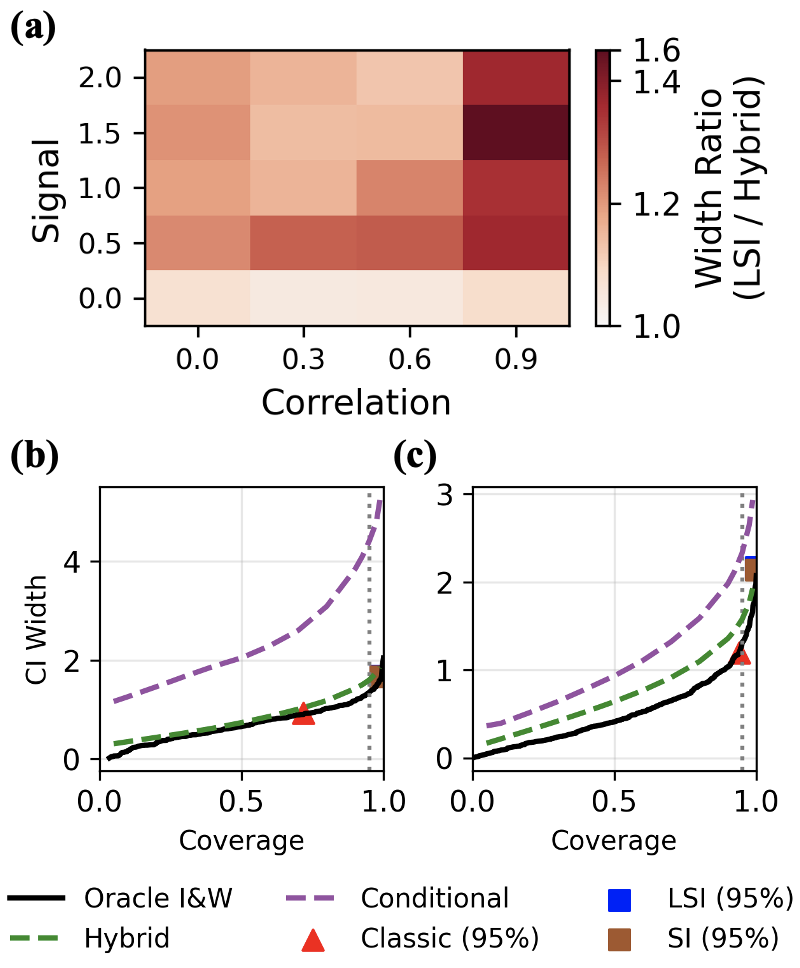}  
    \caption{Vignette \#3 under \eqref{eq:vignette-1} with $n=100$, $p=10$, $\sigma=1$, $\mu=X \beta$, and the selection rule in~\cref{eq:vignette-3-selection}, where the rows of $X$ are equi-correlated normally distributed, and $\beta$ has five nonzero elements. Results are averaged over $100$ simulated datasets. \textbf{(a):} We display the ratio of the average widths of $95\%$ confidence intervals from the LSI infer-and-widen method~\citep{zrnic_locally_2024} and the hybrid method~\citep{mccloskey_hybrid_2024}. A width ratio that exceeds one implies that the infer-and-widen interval is wider. \textbf{(b, c):} Fixing $\rho = 0.9$, we compare the average width of the ``oracle'' infer-and-widen (I\textup{\symbol{`\&}}W) interval to the conditional~\citep{lee_exact_2016} and hybrid~\citep{mccloskey_hybrid_2024} intervals. For reference, we also display the average empirical widths and coverages of the $95\%$ intervals from the classic, simultaneous PoSI (SI)~\citep{berk_valid_2013}, and LSI~\citep{zrnic_locally_2024} approaches. Panels (b) and (c) contain zero and nonzero signal, respectively.
    }
    \label{fig:vignette-3}
\end{figure}

\section{The bias of infer-and-widen intervals}\label{sec:vignette-1-bias}

We now investigate why infer-and-widen intervals are too wide. 
To facilitate the analysis, we consider Vignette \#1 with a  simple selection rule that uses Gaussian noise, 
\begin{equation}
 \SelG(Y) := \argmax_{i \in [n]} (Y_i + c \zeta_i),  \;\;\;\; \zeta_i \iidsim \Norm(0, \sigma^2).
\label{eq:vignette-1-selection-gaussian}
\end{equation}

We know from~\cref{sec:infer_and_widen} that any infer-and-widen interval is centered around $Y_{\SelG(Y)}$.
 \begin{proposition}[Bounding the bias of the infer-and-widen midpoint]\label{prop:bias_bound_gaussian_winner}
    Under \eqref{eq:vignette-1}, with $\mu_1 = \dots = \mu_n$, consider the selection rule $\SelG(Y)$  in \eqref{eq:vignette-1-selection-gaussian}. Then
    \begin{equation*}
        \EE\left[ Y_{\SelG(Y)} - \mu_{\SelG(Y)} \right] \geq \frac{\sigma}{2} (\sqrt{1+c^2} - c) \sqrt{\log n}.
    \end{equation*}
\end{proposition}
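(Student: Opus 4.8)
The plan is to reduce the bias to a scalar multiple of the expected maximum of $n$ Gaussians and then invoke a standard lower bound on that maximum. First, since $\mu_1 = \dots = \mu_n$, the target $\mu_{\SelG(Y)}$ equals the common mean no matter which index is selected, so the bias is $\EE[Y_{\SelG(Y)}] - \mu_1$. Both the selection rule $\argmax_i (Y_i + c\zeta_i)$ and the quantity $Y_{\SelG(Y)} - \mu_1$ are invariant under a common translation of $(Y_1,\dots,Y_n)$, so I may assume without loss of generality that $\mu_1 = \dots = \mu_n = 0$; then $Y_i$ and $\zeta_i$ are independent mean-zero Gaussians and the bias is exactly $\EE[Y_{\SelG(Y)}]$.

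The key step is a regression decomposition. Writing $W_i := Y_i + c\zeta_i \sim \Norm(0, \sigma^2(1+c^2))$, we have $\SelG(Y) = \argmax_i W_i$. Since $\mathrm{Cov}(Y_i, W_i) = \sigma^2$ and $\mathrm{Var}(W_i) = \sigma^2(1+c^2)$, I can write $Y_i = \tfrac{1}{1+c^2} W_i + R_i$, where $R_i := Y_i - \tfrac{1}{1+c^2}W_i$ is Gaussian, uncorrelated with (hence independent of) $W_i$, and is a function only of $(Y_i, \zeta_i)$; therefore $R_i$ is independent of the entire vector $(W_1,\dots,W_n)$. Because the selected index is $\sigma(W_1,\dots,W_n)$-measurable, $\EE[R_{\SelG(Y)}] = \sum_i \EE[R_i \mathbf{1}\{\SelG(Y) = i\}] = \sum_i \EE[R_i]\,\PP(\SelG(Y) = i) = 0$. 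Hence $\EE[Y_{\SelG(Y)}] = \tfrac{1}{1+c^2}\EE[\max_i W_i] = \tfrac{\sigma}{\sqrt{1+c^2}}\,\EE[M_n]$, where $M_n := \max_i Z_i$ for $Z_i := W_i/(\sigma\sqrt{1+c^2})$ i.i.d.\ standard normal.

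It then remains to show $\EE[M_n] \geq \tfrac12 \sqrt{\log n}$, and I expect this to be the main obstacle. The plan is the standard tail-integral argument: write $\EE[M_n] = \int_0^\infty \PP(M_n \geq t)\,dt - \int_0^\infty \PP(M_n \leq -t)\,dt$, control the negative part by $\int_0^\infty \Phi(-t)\,dt = (2\pi)^{-1/2}$ (using $\PP(M_n \leq -t) = \Phi(-t)^n \leq \Phi(-t)$), and lower-bound the positive part by $s\,(1 - \Phi(s)^n)$ via monotonicity, for a threshold $s$ of order $\sqrt{\log n}$. The delicate point is choosing $s$ so that $\Phi(s)^n$ is small while $s$ stays of order $\sqrt{\log n}$, and checking that the resulting bound dominates $\tfrac12\sqrt{\log n}$; the constant is tight only for small $n$, so I would dispatch $n=1$ trivially (both sides vanish) and verify a few small values directly from $\EE[M_n] = n\int z\,\phi(z)\Phi(z)^{n-1}\,dz$, exploiting that the required constant $\tfrac12$ is far below the asymptotic value $\sqrt2$ for the remaining large $n$.

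Finally, combining the previous two steps gives $\EE[Y_{\SelG(Y)}] = \tfrac{\sigma}{\sqrt{1+c^2}}\EE[M_n] \geq \tfrac{\sigma}{2\sqrt{1+c^2}}\sqrt{\log n}$, and the stated bound follows from the elementary inequality $\tfrac{1}{\sqrt{1+c^2}} \geq \sqrt{1+c^2} - c$, which holds because $\sqrt{1+c^2} - c = (\sqrt{1+c^2} + c)^{-1} \leq (\sqrt{1+c^2})^{-1}$ for all $c \geq 0$. Note this shows the claimed bound is in fact a (slightly weaker, more interpretable) consequence of the sharper constant $\tfrac{\sigma}{2\sqrt{1+c^2}}\sqrt{\log n}$.
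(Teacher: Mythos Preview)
Your argument is correct, and in fact yields a slightly sharper bound than the paper's own proof before you relax it at the end. The paper takes a different decomposition: writing $X_i := Y_i + c\zeta_i$ (with the scaling absorbed so that $\zeta_i\sim\Norm(0,c^2\sigma^2)$), it splits $Y_i = X_i - \zeta_i$ and then uses the inequality $\EE[\zeta_i \mid i=\Sel(X)] \leq \EE[\zeta_i \mid i=\Sel(\zeta)]$ to obtain $\EE[Y_{\SelG(Y)}] \geq \sigma(\sqrt{1+c^2}-c)\,\EE[M_n]$, after which it invokes the same lower bound $\EE[M_n]\geq \tfrac12\sqrt{\log n}$ from \citet{kamath_bounds}. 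Your regression (orthogonal) decomposition $Y_i = (1+c^2)^{-1}W_i + R_i$ replaces that inequality step with an exact identity $\EE[Y_{\SelG(Y)}] = \sigma(1+c^2)^{-1/2}\,\EE[M_n]$, which is cleaner and, as you note, gives the strictly larger constant $(1+c^2)^{-1/2}\geq \sqrt{1+c^2}-c$; the paper's statement then follows from this elementary comparison. The only part you leave as a sketch is the bound $\EE[M_n]\geq \tfrac12\sqrt{\log n}$, which the paper simply cites rather than reproving, so you could equally well cite it and be done.
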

That is, the bias increases on the order of $\sqrt{\log n}$. Note that as $c \to \infty$, i.e., as the selection rule becomes fully random, this lower bound goes to zero as expected.

Noting that $\SelG(Y)$ in \eqref{eq:vignette-1-selection-gaussian} is independent of $Y-\frac{1}{c} \zeta$ \citep{neufeld_data_2024}, it is straightforward to show that $[(Y-\frac{1}{c} \zeta)_{\SelG(Y)} \pm \sigma (1 + c^{-2})^{1/2} z_{1-\alpha/2}]$ is a valid $1-\alpha$ conditional confidence interval for $\mu_{\SelG(Y)}$ in the sense of~\cref{eq:conditional_coverage}. Furthermore, $$
    \EE\left[\left(Y-\frac{1}{c} \zeta \right)_{\SelG(Y)}  - \mu_{\SelG(Y)}  \right] = 0,$$
i.e., the midpoint is unbiased for the mean of the selected observation. 

Thus, infer-and-widen here leads to a confidence interval that is centered in the wrong place, but a simple alternative that makes use of the same selection event yields an interval that is correctly centered.

\section{Discussion}
\label{sec:discussion}

In this paper, we compared the performance of the ``infer-and-widen'' approach to simple alternative methods in three vignettes: two taken from \cite{zrnic_post-selection_2023}, and the third involving feature selection via the lasso. Our findings are as follows:
\begin{enumerate}
    \item In Vignette \#1, the ``oracle'' interval was wider than the alternative approaches we consider; thus, there may be limited value in developing narrower infer-and-widen methods that can be computed in practice.
    \item In Vignettes \#2 and \#3, the ``oracle'' interval is the narrowest interval, but infer-and-widen intervals that can be computed in practice are often much wider.
\end{enumerate}



One may wonder whether our findings conflict with those  of \citet{goeman_selection_2024}, who proved that any method that guarantees conditional coverage, \eqref{eq:conditional_coverage}, is dominated by an unconditional multiple testing procedure defined on the full universe of hypotheses. However, their setting is quite different from ours: they consider inference on non-selected parameters and unconditional procedures outside of the infer-and-widen framework.




\begin{appendix}

\section{Deferred notation}\label{sec:deferred}
In the context of~\cref{prop:vignette-2-CI-LDF} , we define
\begin{align}
    A := \begin{bmatrix}
        0_{1 \times n_1} & -1 & 0_{1 \times n_2} \\
        I_{n_1} & -1_{n_1} & 0_{n_1 \times n_2} \\
        0_{n_2 \times n_1} & -1_{n_2} & I_{n_2} \\
        -I_{n_1} & -1_{n_1} & 0_{n_1 \times n_2} \\
        0_{n_2 \times n_1} & -1_{n_2} & -I_{n_2} 
    \end{bmatrix}, \nonumber
\end{align}
where we denote $n_1 := \SelLtwo(y; \zeta)-1$ and $n_2 := p - \SelLtwo(y; \zeta)$ for brevity. Then we can define
\begin{align*}
    v_j &:= -(A\zeta + A X^\top W)_j / (A X^\top X_{\SelLtwo(Y)})_j,\\
    v_{\max} &:= \underset{j : \rbr{\delta A X^\top X_{\SelLtwo(Y)}}_j < 0}{\mathrm{max}} v_j,\\
    v_{\min} &:= \min_{j : \rbr{\delta A X^\top X_{\SelLtwo(Y)}}_j > 0} v_j.
\end{align*}

\end{appendix}
%
%


\begin{funding}

R.P. was partially supported by an Amazon AI research award. D.W. acknowledges funding from NSF DMS 2322920, NSF DMS 2514344, NIH 5P30DA048736, ONR N00014-23-1-2589, and a Simons Investigator Award for Mathematical Modeling of Living Systems.
\end{funding}

\begin{supplement}
\stitle{Supplement to ``Infer-and-widen, or not?''}
\sdescription{Contains proofs of main results, and additional simulation results and figures.}
\end{supplement}


\bibliographystyle{imsart-number} 
\bibliography{references}       

@article{panigrahi_exact_2024,
	title = {Exact selective inference with randomization},
	abstract = {We introduce a pivot for exact selective inference with randomization. Not only does our pivot lead to exact inference in Gaussian regression models, but it is also available in closed form. We reduce this problem to inference for a bivariate truncated Gaussian variable. By doing so, we give up some power that is achieved with approximate maximum likelihood estimation in Panigrahi \&amp; Taylor (2023). Yet our pivot always produces narrower confidence intervals than a closely related data-splitting procedure. We investigate the trade-off between power and exact selective inference on simulated datasets and an HIV drug resistance dataset.},
	urldate = {2024-07-01},
	journal = {Biometrika},
	author = {Panigrahi, Snigdha and Fry, Kevin and Taylor, Jonathan},
	month = apr,
	year = {2024},
}

@article{gao_selective_2022,
	title = {Selective {Inference} for {Hierarchical} {Clustering}},
        abstract = {Classical tests for a difference in means control the Type I error rate when the groups are defined a priori. However, when the groups are instead defined via clustering, then applying a classical test yields an extremely inflated Type I error rate. Notably, this problem persists even if two separate and independent datasets are used to define the groups and to test for a difference in their means. To address this problem, in this article, we propose a selective inference approach to test for a difference in means between two clusters. Our procedure controls the selective Type I error rate by accounting for the fact that the choice of null hypothesis was made based on the data. We describe how to efficiently compute exact p-values for clusters obtained using agglomerative hierarchical clustering with many commonly used linkages. We apply our method to simulated data and to single-cell RNA-sequencing data. Supplementary materials for this article are available online.},
	language = {en},
	journal = {Journal of the American Statistical Association},
	author = {Gao, Lucy L. and Bien, Jacob and Witten, Daniela},
	month = oct,
	year = {2024},
	pages = {332--342},
        volume = {119},
        issue = {545}
}

@article{lee_exact_2016,
    author = {Jason D. Lee and Dennis L. Sun and Yuekai Sun and Jonathan E. Taylor},
    title = {{Exact post-selection inference, with application to the lasso}},
    volume = {44},
    journal = {The Annals of Statistics},
    number = {3},
    pages = {907 -- 927},
    keywords = {Confidence interval, hypothesis test, Lasso, Model selection},
    year = {2016},
}

@article{tibshirani1996regression,
  title={Regression shrinkage and selection via the lasso},
  author={Tibshirani, Robert},
  journal={Journal of the Royal Statistical Society Series B: Statistical Methodology},
  volume={58},
  number={1},
  pages={267--288},
  year={1996},
}

@article{goeman_selection_2024,
	title = {On selection and conditioning in multiple testing and selective inference},
	volume = {111},
	number = {2},
	journal = {Biometrika},
	author = {Goeman, Jelle J and Solari, Aldo},
	month = jun,
	year = {2024},
	pages = {393--416},
}

@article{neufeld_data_2024,
	title = {Data {Thinning} for {Convolution}-{Closed} {Distributions}},
	volume = {25},
	abstract = {We propose data thinning, an approach for splitting an observation into two or more independent parts that sum to the original observation, and that follow the same distribution as the original observation, up to a (known) scaling of a parameter. This very general proposal is applicable to any convolution-closed distribution, a class that includes the Gaussian, Poisson, negative binomial, gamma, and binomial distributions, among others. Data thinning has a number of applications to model selection, evaluation, and inference. For instance, cross-validation via data thinning provides an attractive alternative to the usual approach of cross-validation via sample splitting, especially in settings in which the latter is not applicable. In simulations and in an application to single-cell RNA-sequencing data, we show that data thinning can be used to validate the results of unsupervised learning approaches, such as k-means clustering and principal components analysis, for which traditional sample splitting is unattractive or unavailable.},
	number = {57},
	urldate = {2024-07-12},
	journal = {Journal of Machine Learning Research},
	author = {Neufeld, Anna and Dharamshi, Ameer and Gao, Lucy L. and Witten, Daniela},
	year = {2024},
	pages = {1--35},
}

@article{dharamshi_generalized_2024,
    author = {Dharamshi, Ameer and Neufeld, Anna and Motwani, Keshav and Gao, Lucy L. and Witten, Daniela and Bien, Jacob},
    title = {Generalized Data Thinning Using Sufficient Statistics},
    journal = {Journal of the American Statistical Association},
    volume = {0},
    number = {0},
    pages = {1--13},
    year = {2024},
}

@article{andrews_inference_2024,
	title = {Inference on {Winners}*},
	volume = {139},
	number = {1},
	journal = {The Quarterly Journal of Economics},
	author = {Andrews, Isaiah and Kitagawa, Toru and McCloskey, Adam},
	month = feb,
	year = {2024},
	pages = {305--358},
}

@article{zrnic_post-selection_2023,
	title = {Post-selection inference via algorithmic stability},
	volume = {51},
	abstract = {When the target of statistical inference is chosen in a data-driven manner, the guarantees provided by classical theories vanish. We propose a solution to the problem of inference after selection by building on the framework of algorithmic stability, in particular its branch with origins in the field of differential privacy. Stability is achieved via randomization of selection and it serves as a quantitative measure that is sufficient to obtain nontrivial post-selection corrections for classical confidence intervals. Importantly, the underpinnings of algorithmic stability translate directly into computational efficiency—our method computes simple corrections for selective inference without recourse to Markov chain Monte Carlo sampling.},
	number = {4},
	urldate = {2023-11-06},
	journal = {The Annals of Statistics},
	author = {Zrnic, Tijana and Jordan, Michael I.},
	month = aug,
	year = {2023},
	pages = {1666--1691},
}

@article{rasines_splitting_2023,
	title = {Splitting strategies for post-selection inference},
	volume = {110},
	abstract = {We consider the problem of providing valid inference for a selected parameter in a sparse regression setting. It is well known that classical regression tools can be unreliable in this context because of the bias generated in the selection step. Many approaches have been proposed in recent years to ensure inferential validity. In this article we consider a simple alternative to data splitting based on randomizing the response vector, which allows for higher selection and inferential power than the former, and is applicable with an arbitrary selection rule. We perform a theoretical and empirical comparison of the two methods and derive a central limit theorem for the randomization approach. Our investigations show that the gain in power can be substantial.},
	number = {3},
	urldate = {2023-11-07},
	journal = {Biometrika},
	author = {Rasines, D García and Young, G A},
	month = sep,
	year = {2023},
	pages = {597--614},
}

@article{dwork_book_2014,
    author = {Dwork, Cynthia and Roth, Aaron},
    title = {The Algorithmic Foundations of Differential Privacy},
    year = {2014},
    issue_date = {August 2014},
    address = {Hanover, MA, USA},
    volume = {9},
    number = {3–4},
    abstract = {The problem of privacy-preserving data analysis has a long history spanning multiple disciplines. As electronic data about individuals becomes increasingly detailed, and as technology enables ever more powerful collection and curation of these data, the need increases for a robust, meaningful, and mathematically rigorous definition of privacy, together with a computationally rich class of algorithms that satisfy this definition. Differential Privacy is such a definition.After motivating and discussing the meaning of differential privacy, the preponderance of this monograph is devoted to fundamental techniques for achieving differential privacy, and application of these techniques in creative combinations, using the query-release problem as an ongoing example. A key point is that, by rethinking the computational goal, one can often obtain far better results than would be achieved by methodically replacing each step of a non-private computation with a differentially private implementation. Despite some astonishingly powerful computational results, there are still fundamental limitations — not just on what can be achieved with differential privacy but on what can be achieved with any method that protects against a complete breakdown in privacy. Virtually all the algorithms discussed herein maintain differential privacy against adversaries of arbitrary computational power. Certain algorithms are computationally intensive, others are efficient. Computational complexity for the adversary and the algorithm are both discussed.We then turn from fundamentals to applications other than queryrelease, discussing differentially private methods for mechanism design and machine learning. The vast majority of the literature on differentially private algorithms considers a single, static, database that is subject to many analyses. Differential privacy in other models, including distributed databases and computations on data streams is discussed.Finally, we note that this work is meant as a thorough introduction to the problems and techniques of differential privacy, but is not intended to be an exhaustive survey — there is by now a vast amount of work in differential privacy, and we can cover only a small portion of it.},
    journal = {Found. Trends Theor. Comput. Sci.},
    month = {aug},
    pages = {211–407},
    numpages = {197}
}

@article{berk_valid_2013,
	title = {Valid post-selection inference},
	volume = {41},
	abstract = {It is common practice in statistical data analysis to perform data-driven variable selection and derive statistical inference from the resulting model. Such inference enjoys none of the guarantees that classical statistical theory provides for tests and confidence intervals when the model has been chosen a priori. We propose to produce valid “post-selection inference” by reducing the problem to one of simultaneous inference and hence suitably widening conventional confidence and retention intervals. Simultaneity is required for all linear functions that arise as coefficient estimates in all submodels. By purchasing “simultaneity insurance” for all possible submodels, the resulting post-selection inference is rendered universally valid under all possible model selection procedures. This inference is therefore generally conservative for particular selection procedures, but it is always less conservative than full Scheffé protection. Importantly it does not depend on the truth of the selected submodel, and hence it produces valid inference even in wrong models. We describe the structure of the simultaneous inference problem and give some asymptotic results.},
	number = {2},
	urldate = {2023-05-17},
	journal = {The Annals of Statistics},
	author = {Berk, Richard and Brown, Lawrence and Buja, Andreas and Zhang, Kai and Zhao, Linda},
	month = apr,
	year = {2013},
	keywords = {62J05, Model selection, Linear regression, 62J15, Family-wise error, high-dimensional inference, multiple comparison, sphere packing},
	pages = {802--837},
}

@article{bonferroni_teoria_1936,
	title = {Teoria statistica delle classi e calcolo delle probabilita},
	volume = {8},
	urldate = {2020-11-17},
	journal = {Pubblicazioni del R Istituto Superiore di Scienze Economiche e Commericiali di Firenze},
	author = {Bonferroni, C.},
	year = {1936},
	pages = {3--62},
}

@article{holm_simple_1979,
	title = {A {Simple} {Sequentially} {Rejective} {Multiple} {Test} {Procedure}},
	volume = {6},
	abstract = {This paper presents a simple and widely applicable multiple test procedure of the sequentially rejective type, i.e. hypotheses are rejected one at a time until no further rejections can be done. It is shown that the test has a prescribed level of significance protection against error of the first kind for any combination of true hypotheses. The power properties of the test and a number of possible applications are also discussed.},
	number = {2},
	urldate = {2024-03-11},
	journal = {Scandinavian Journal of Statistics},
	author = {Holm, Sture},
	year = {1979},
	pages = {65--70},
}

@article{sidak_rectangular_1967,
	title = {Rectangular {Confidence} {Regions} for the {Means} of {Multivariate} {Normal} {Distributions}},
	volume = {62},
	abstract = {For rectangular confidence regions for the mean values of multivariate normal distributions the following conjecture of 0. J. Dunn [3], [4] is proved: Such a confidence region constructed for the case of independent coordinates is, at the same time, a conservative confidence region for any case of dependent coordinates. This result is based on an inequality for the probabilities of rectangles in normal distributions, which permits one to factor out the probability for any single coordinate.},
	number = {318},
	journal = {Journal of the American Statistical Association},
	author = {Šidák, Zbyněk},
	month = jun,
	year = {1967},
	pages = {626--633},
}

@article{tian_selective_2018,
	title = {Selective {Inference} with a {Randomized} {Response}},
	volume = {46},
	abstract = {Inspired by sample splitting and the reusable holdout introduced in the field of differential privacy, we consider selective inference with a randomized response. We discuss two major advantages of using a randomized response for model selection. First, the selectively valid tests are more powerful after randomized selection. Second, it allows consistent estimation and weak convergence of selective inference procedures. Under independent sampling, we prove a selective (or privatized) central limit theorem that transfers procedures valid under asymptotic normality without selection to their corresponding selective counterparts. This allows selective inference in nonparametric settings. Finally, we propose a framework of inference after combining multiple randomized selection procedures. We focus on the classical asymptotic setting, leaving the interesting high-dimensional asymptotic questions for future work.},
	number = {2},
	journal = {The Annals of Statistics},
	author = {Tian, Xiaoying and Taylor, Jonathan},
	year = {2018},
	pages = {679--710},
}

@article{cox_note_1975,
	title = {A note on data-splitting for the evaluation of significance levels},
	volume = {62},
	abstract = {It has sometimes been suggested that to overcome difficulties arising in significance tests when the effects tested are selected in the light of the data, the data should be split randomly into two portions. The first portion is used to choose the hypothesis for test and the second portion for the evaluation of significance. After some general criticism of the idea, it is investigated theoretically on a simple problem about normal means. Recommendations are reached about the proportions into which the data should be divided and the theoretical efficiency of the procedure is assessed and found to be quite high.},
	number = {2},
	journal = {Biometrika},
	author = {Cox, D. R.},
	month = aug,
	year = {1975},
	pages = {441--444},
}

@book{scheffe_analysis_1959,
	title = {The analysis of variance},
	abstract = {The basic theory of analysis of variance is considered for several different mathematical models. Models with fixed effects and independent observations of equal variance are discussed in terms of point estimation, construction of confidence ellipsoids and tests in the general case under normal theory, the one-way layout, the complete higher-way layouts, some incomplete layouts (Latin squares, incomplete blocks, and nested designs), and the analysis of covariance. Other models considered are random-effects models, mixed models, and randomization models. The effects of departures from the underlying assumptions are also treated. Intended as a text for a 1-semester or 2-quarter course at the senior or graduate level, and for self-study, this book assumes background knowledge equivalent to a year course in statistics. The needed vector and matrix algebra may be learned from the appendices. (PsycINFO Database Record (c) 2016 APA, all rights reserved)},
	publisher = {Wiley},
	author = {Scheffé, Henry},
	year = {1959},
}

@misc{fithian_optimal_2017,
	title = {Optimal {Inference} {After} {Model} {Selection}},
	abstract = {To perform inference after model selection, we propose controlling the selective type I error; i.e., the error rate of a test given that it was performed. By doing so, we recover long-run frequency properties among selected hypotheses analogous to those that apply in the classical (non-adaptive) context. Our proposal is closely related to data splitting and has a similar intuitive justification, but is more powerful. Exploiting the classical theory of Lehmann and Scheff{\textbackslash}'e (1955), we derive most powerful unbiased selective tests and confidence intervals for inference in exponential family models after arbitrary selection procedures. For linear regression, we derive new selective z-tests that generalize recent proposals for inference 2024/1/2
Journal
Journal of the American Statistical Association
Volume
119
Issue
545
Pages
332-342after model selection and improve on their power, and new selective t-tests that do not require knowledge of the error variance.},
	publisher = {arXiv},
	author = {Fithian, William and Sun, Dennis and Taylor, Jonathan},
	month = apr,
	year = {2017},
	note = {arXiv:1410.2597},
}

@article{leiner_data_2023,
    author = {Leiner, James and Duan, Boyan and Wasserman , Larry and Ramdas, Aaditya},
    title = {Data Fission: Splitting a Single Data Point},
    journal = {Journal of the American Statistical Association},
    volume = {0},
    number = {0},
    pages = {1--12},
    year = {2023},
}

@article{chen_selective_2023,
	title = {Selective inference for k-means clustering},
	volume = {24},
	abstract = {We consider the problem of testing for a difference in means between clusters of observations identified via k-means clustering. In this setting, classical hypothesis tests lead to an inflated Type I error rate. In recent work, Gao et al. (2022) considered a related problem in the context of hierarchical clustering. Unfortunately, their solution is highly-tailored to the context of hierarchical clustering, and thus cannot be applied in the setting of k-means clustering. In this paper, we propose a p-value that conditions on all of the intermediate clustering assignments in the k-means algorithm. We show that the p-value controls the selective Type I error for a test of the difference in means between a pair of clusters obtained using k-means clustering in finite samples, and can be efficiently computed. We apply our proposal on hand-written digits data and on single-cell RNA-sequencing data.},
	number = {152},
	urldate = {2024-07-01},
	journal = {Journal of Machine Learning Research},
	author = {Chen, Yiqun T. and Witten, Daniela M.},
	year = {2023},
	pages = {1--41},
}

@article{mccloskey_hybrid_2024,
	title = {Hybrid confidence intervals for informative uniform asymptotic inference after model selection},
	volume = {111},
	number = {1},
	journal = {Biometrika},
	author = {McCloskey, A},
	month = mar,
	year = {2024},
	pages = {109--127},
}

@misc{benjamini_confidence_2019,
	title = {Confidence {Intervals} for {Selected} {Parameters}},
	publisher = {arXiv},
	author = {Benjamini, Yoav and Hechtlinger, Yotam and Stark, Philip B.},
	month = jun,
	year = {2019},
	note = {arXiv:1906.00505},
}

@article{fuentes_confidence_2018,
	title = {Confidence intervals for the means of the selected populations},
	volume = {12},
	journal = {Electronic Journal of Statistics},
	author = {Fuentes, Claudio and Casella, George and Wells, Martin T.},
	month = jan,
	year = {2018},
	pages = {58--79},
}

@article{zrnic_locally_2024,
	title = {Locally simultaneous inference},
	volume = {52},
	number = {3},
	journal = {The Annals of Statistics},
	author = {Zrnic, Tijana and Fithian, William},
	month = jun,
	year = {2024},
	pages = {1227--1253},
}

@article{zrnic_flexible_2025,
	title = {A flexible defense against the winner’s curse},
	volume = {53},
	number = {6},
	journal = {The Annals of Statistics},
	author = {Zrnic, Tijana and Fithian, William},
	month = dec,
	year = {2025},
	pages = {2516--2535},
}

@article{dunn_multiple_1961,
	title = {Multiple {Comparisons} {Among} {Means}},
	volume = {56},
	number = {293},
	journal = {Journal of the American Statistical Association},
	author = {Dunn, Olive Jean},
	year = {1961},
	pages = {52--64},
}

@article{hwang_empirical_2013,
	title = {Empirical {Bayes} {Confidence} {Intervals} for {Selected} {Parameters} in {High}-{Dimensional} {Data}},
	volume = {108},
	number = {502},
	journal = {Journal of the American Statistical Association},
	author = {Hwang, J. T. Gene and Zhao, Zhigen},
	month = jun,
	year = {2013},
	pages = {607--618},
}


\end{document}